\documentclass[11pt]{article}
\usepackage[letterpaper,margin=1in]{geometry}
\usepackage[T1]{fontenc}
\usepackage{lmodern}
\usepackage{amsmath,amssymb,amsthm}
\usepackage{microtype}
\usepackage{enumitem}
\usepackage{booktabs}
\usepackage{tikz}
\usepackage{url}
\usetikzlibrary{arrows.meta,positioning,decorations.pathreplacing}
\usepackage[colorlinks=true,linkcolor=blue!55!black,citecolor=blue!55!black,
  urlcolor=blue!55!black]{hyperref}
\newtheorem{theorem}{Theorem}[section]
\newtheorem{lemma}[theorem]{Lemma}

\theoremstyle{definition}

\newcommand{\Path}{\operatorname{Path}}
\setlist[enumerate]{topsep=5pt,itemsep=5pt,leftmargin=*}
\setlist[itemize]{topsep=5pt,itemsep=5pt,leftmargin=*}
\allowdisplaybreaks[1]
\hypersetup{pdftitle={Improved Integrality Gap for Multicommodity Flow on Trees},
  pdfauthor={Elfarouk Harb},
  pdfsubject={Weighted unit-demand multicommodity flow on trees}}
\title{Improved Integrality Gap for Multicommodity Flow on Trees}
\author{Elfarouk Harb\thanks{This proof sat in a drawer for more than five years. It grew out of my pre-AI work on the problem with Professor Chandra Chekuri during my first year of the PhD, whom I thank for his valuable comments on this manuscript. I set it aside for an embarrassing reason: Lemma~\ref{lem:packing} seemed ``clearly'' false to me. Working from my old private notes, ChatGPT Astra proved the lemma in half an hour and repaired several minor correctness issues along the way. It also improved my original integrality-gap bound from $4/11$ to $2/5$ via a cleverer coloring scheme.}\\eyfmharb@gmail.com}
\date{}

\begin{document}
\maketitle
\vspace{-2em}
\begin{abstract}
We improve the best known lower bound on the integrality gap for weighted unit-demand multicommodity flow on trees from \(1/4\) to \(2/5\), improving on the long-standing bound of Chekuri, Mydlarz, and Shepherd~\cite{CMS}. We give the proof in two stages. First, a surprisingly simple packing lemma and an inductive coloring argument give an intermediate bound of \(4/11\). We then refine the argument to obtain \(2/5\).
\end{abstract}

\section{Introduction and main result}
\label{sec:introduction}

Let \(T=(V,E)\) be a tree, and let \(u_e\ge1\) be an integer capacity for each edge \(e\). Let \(F\) be a finite collection of labeled requests. Each request \(f\in F\) has a nonnegative weight \(w_f\) and an associated path \(\Path(f)\), the unique path between its endpoints on the tree. Different requests may have the same endpoints; each request can be accepted at most once. For a subset \(S\) of requests, write
\[
  w(S)=\sum_{f\in S}w_f.
\]
A subset \(S\) is \emph{feasible} if
\[
  |\{f\in S:e\in\Path(f)\}|\le u_e
  \qquad\text{for every edge }e.
\]
We call the cardinality on the left the \emph{load} of \(S\) on \(e\). When we work directly with a collection of paths, its load on an edge likewise means the number of its paths that use that edge.

The natural linear programming relaxation for unit-demand multicommodity flow on trees assigns an acceptance fraction \(x_f\) to each request:
\begin{equation}
\begin{aligned}
  \text{maximize}\quad &\sum_{f\in F}w_fx_f,\\
  \text{subject to}\quad
  &\sum_{\substack{f\in F : e\in\Path(f)}}x_f\le u_e
      &&\text{for every edge }e,\\
  &0\le x_f\le1 &&\text{for every request }f.
\end{aligned}
\label{eq:lp}
\end{equation}

The \emph{integrality gap} is the infimum, over instances with positive LP optimum, of the ratio of the optimal integer solution value to the optimal LP value. We prove our result in two stages. The first stage gives the following \(4/11\) bound.

\begin{theorem}\label{thm:eleven-gap}
For every feasible solution \(x\) of~\eqref{eq:lp}, there is a feasible subset \(S \subseteq F\) of requests such that
\[
  w(S)\ge\frac{4}{11}\sum_{f\in F}w_fx_f.
\]
Moreover, for rationally encoded inputs, such a subset can be found in polynomial time.
\end{theorem}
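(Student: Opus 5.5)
The plan is to reduce the statement to a rounding problem that can be handled locally at each node of a rooted tree, and then to pay the constant \(4/11\) once, through a coloring that is built by induction. First assume \(x\) is rational; an optimal basic LP solution is, and for the existence statement we can approximate. Choose an integer \(N\) such that every \(Nx_f\) is an integer, and replace each request \(f\) by \(Nx_f\) identical copies of \(\Path(f)\). This gives a multiset \(\mathcal P\) of paths whose load on every edge \(e\) is at most \(Nu_e\).

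It will then suffice to color \(\mathcal P\) with \(11\) colors so that two conditions hold. First, for every edge \(e\), every set of \(4\) colors has total load at most \(4u_e\). More precisely, we aim for a family of \(\binom{11}{4}\) subsets, each of which is integrally feasible and respects multiplicities, with every copy covered uniformly often. Second, the copies of a single request receive distinct colors. Averaging over the family then produces a feasible set \(S\) with \(w(S)\ge \tfrac{4}{11}\sum_f w_fx_f\). Since only the existence of a good convex combination matters, we may instead argue through LP duality: it is enough to show that, for every nonnegative weight vector, some feasible \(S\) achieves the bound. The polynomial-time claim would then follow from the ellipsoid method, or more directly from making the coloring constructive.

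Next, root \(T\) and assign each path to its apex, the vertex closest to the root. I would process apices bottom-up, or top-down as in Chekuri, Mydlarz, and Shepherd~\cite{CMS}. At an apex \(v\), the paths with apex \(v\) use one or two of the edges from \(v\) to its children. Given the colors already committed on those child edges by deeper paths (or by shallower paths, in the top-down order), the task at \(v\) is a \(b\)-matching-type coloring problem on the star at \(v\). Here the child edges play the role of vertices, the paths play the role of edges, and the residual per-color budgets act as capacities.

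The \emph{packing lemma} I would prove is the following. For such a local instance whose fractional load respects the scaled capacities, the paths can be colored so that each child edge receives at most a bounded number of paths of each color class. The inductive invariant then states that every edge \(e\) carries, in every \(4\)-set of colors, at most \(4u_e\) paths; equivalently, the per-color loads on \(e\) are balanced to within an additive constant, with the slack coming from \(11>4\cdot 2.5\). This invariant has to be passed from the child edges at \(v\) up to the edge above \(v\).

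The main obstacle is the packing lemma itself. The local instance is a general graph, so odd cycles obstruct integral \(b\)-matchings and make naive balancing lose a factor of \(2\). I would first try an Euler-tour or alternating-orientation argument: pair up the path-ends at each child edge and traverse closed trails, alternating colors along each trail. This balances every color up to one unit per edge, with odd cycles costing at most one extra unit. I would then check that this additive loss is absorbed by the slack of choosing \(4\) out of \(11\) colors. Verifying that the extra unit never accumulates along the path to the root, so that the invariant is truly local, is where I expect the careful case analysis to sit.
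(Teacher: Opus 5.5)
\textbf{There is a genuine gap: the reduction step is inconsistent, and the core lemma is deferred rather than proved.}

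\emph{The reduction.} Suppose every $4$ of the $11$ colors had total load at most $4u_e$ on $e$. Summing over all $\binom{11}{4}$ such sets, where each color lies in $\binom{10}{3}$ of them, bounds the total load on $e$ by $11u_e$. But $\mathcal P$ has load up to $Nu_e$, and $N$ is dictated by the denominators of $x$, so it is unbounded. For the same reason, a request with more than $11$ copies cannot have all its copies in distinct colors. Finally, averaging over $4$-sets of colors returns $\tfrac{4}{11}$ of $w(\mathcal P)=N\sum_f w_fx_f$, not of $\sum_f w_fx_f$, and you never show that the averaged-over sets are feasible for $u_e$.

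What the $N$-fold copying route actually requires is a partition of $\mathcal P$ into at most $11N/4$ classes, each feasible for $u_e$ and containing at most one copy of each request, for arbitrary $N$. That is a strengthening of the $4N$-class theorem of Chekuri, Mydlarz, and Shepherd, and it is the entire difficulty. Two further problems follow. Since $N$ can be exponential in the input size, the copying also breaks the polynomial-time claim. And the appeal to LP duality is circular, because the theorem is already stated for every nonnegative weight vector.

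\emph{The core lemma.} The real content is your local packing lemma together with the invariant that the per-level additive error from Euler-tour balancing does not accumulate toward the root. You defer exactly this step, and $11>4\cdot 2.5$ is not derived from anything. This is precisely where balancing arguments lose, as you note yourself for odd cycles. As written, there is no proof.

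\emph{The idea you are missing} is to pay the additive error once, globally, before any coloring, and then make the coloring invariant exact rather than approximate. The paper proceeds as follows.
\begin{itemize}
\item It attaches two private capacity-one leaf edges to each request, makes four copies, and scales capacities to $4u_e$.
\item It applies the K\"onemann--Parekh--Pritchard bicriteria rounding. This yields copies of total weight at least $4W$ with load at most $4u_e+2$. The private edges, with target capacity one, force copies of the same request into different color classes.
\item It groups the paths ending at each leaf into bins of at most six paths. A leaf edge gets at most $u_e$ bins, since $4u+2\le 6u$. Requiring distinct colors within a bin then bounds each color's load on that edge by the number of bins.
\item It contracts an attached star and repacks the intact crossing blocks into at most $u_{vw}$ bins of size at most six. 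This uses the packing lemma: greedy merging leaves $q$ bins of total size at least $4q-1$, so $q\le u$.
\item It recurses, then greedily colors each restored local path. Such a path sees at most $5+5$ used colors in its two bins, so one of eleven is free.
\end{itemize}
Because the bins encode the capacity constraints exactly, and internal loads stay at most $4u_e+2$ throughout, nothing accumulates along the tree.
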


We keep the complete proof of this intermediate result because it is elementary and introduces the binning and contraction coloring argument. Afterward, we tighten one step in the argument, and prove our final result.

\begin{theorem}\label{thm:main}
For every feasible solution \(x\) of~\eqref{eq:lp}, there is a feasible subset \(S\subseteq F\) of requests such that
\[
  w(S)\ge\frac{2}{5}\sum_{f\in F}w_fx_f.
\]
Consequently, the integrality gap is at least \(2/5\). Moreover, for rationally encoded inputs, such a subset can be found in polynomial time.
\end{theorem}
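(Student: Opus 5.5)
We plan to obtain Theorem~\ref{thm:main} by running the same binning-and-coloring pipeline used for Theorem~\ref{thm:eleven-gap}, tightening only the final coloring step. First, reduce to a uniform structure. Root the tree and round the fractional solution: requests with \(x_f\) close to \(1\) are handled separately. Accepting all requests with \(x_f=1\) is feasible once residual capacities are adjusted. The remaining requests are grouped into weight classes, or by the depth of their topmost vertex (their LCA), so that within each group a packing argument applies. The packing lemma (Lemma~\ref{lem:packing}, stated later in the paper and assumed here) should say the following. For requests with LCA at a common vertex, any fractional load profile respecting capacities can be decomposed, or dominated, by a small number of integral feasible path families. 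Concretely, I would use it to write \(x\) as a convex-like combination of collections \(S_1,\dots,S_k\), each of load at most \(u_e\) on every edge ``from below'' the LCA. The collections are only feasible up to interactions with paths from other LCAs.

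Second, process the LCAs bottom-up (the contraction coloring). Each request has two ``legs'' descending from its LCA. Conflicts with previously chosen requests occur only on edges where capacity has already been consumed by deeper-LCA requests. We color the candidate collections with a fixed palette. The aim is that for each edge the number of colors ``blocked'' there is bounded relative to the fractional load already used. We then pick the best color class and compare its weight with \(\sum w_f x_f\). In the \(4/11\) argument, I expect the bookkeeping to lose a factor like \(\frac{1}{1+\alpha}\cdot\beta\) with specific constants. For \(2/5\), I would change the scheme so that each leg of a request is charged independently: a request survives unless one of its two legs is blocked. Each leg should be blocked with probability at most \(3/10\), which would give survival probability \(\ge 2/5\) after accounting for the packing-lemma loss. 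Equivalently, I would use a randomized coloring, choosing a uniformly random color and an independent offset per edge-class. I would then compute the expected accepted weight by linearity and derandomize it by conditional expectations. Polynomial time then follows because the packing decomposition and the palette have polynomial size for rational inputs.

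The key inequality to verify is per-edge. On an edge \(e\) below an LCA \(v\), let \(y\) be the total fractional load of requests with LCA strictly below that use \(e\), and \(z\) the load of requests with LCA \(v\) using \(e\), with \(y+z\le u_e\). We need the refined coloring to reject at most a \((1-\text{target})\)-fraction of \(z\) weighted appropriately. The first thing to try is a case split on whether \(u_e\) is small (\(1\) or \(2\)) or large. For large \(u_e\) an averaging or concentration argument gives slack. The tight cases are \(u_e\in\{1,2\}\), and for these I would hand-check that the refined coloring achieves ratio exactly \(2/5\).

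I expect the main obstacle to be this second step: designing the refined coloring so that the two legs of a request, which are blocked by correlated deeper decisions, do not jointly cost more than the target. The \(4/11\) bound presumably loses from a union bound over both legs. Improving it to \(2/5\) requires either coupling the two legs' colors, by assigning them via a shared random offset, or showing that in the worst case only one leg can be saturated. I would first attempt the coupling approach, and fall back to an explicit case analysis on the small-capacity tight configurations if the coupling does not give \(2/5\) directly.
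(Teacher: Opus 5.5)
\textbf{There is a genuine gap.} Your proposal is not yet a proof, and its two load-bearing steps fail as stated.

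First, the decomposition you attribute to Lemma~\ref{lem:packing} is false. Take the three-leaf star with unit capacities and $x_f=\tfrac12$ on the three pairwise requests. Rooted at the center, all three requests share the same LCA. Every feasible collection contains at most one request, yet $\sum_f x_f=\tfrac32$, so $x$ is not a convex combination of feasible collections. A weaker ``dominated by $k$ feasible families'' version is itself an integrality-gap statement of the kind you are trying to prove, so it cannot be assumed. The paper's Lemma~\ref{lem:packing} is only a bin-packing fact: items of size at most six with total size at most $4u+2$ fit into at most $u$ bins of size six.

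Second, the claim that each leg is blocked with probability at most $3/10$ has no mechanism behind it. Take $u_e=1$ and suppose deeper requests are accepted with probability comparable to their fractional values. Then a deeper load near $1$ on $e$ blocks $e$ with probability near $1$. To prevent this you must scale $x$ down by some $\alpha<1$. Every request then pays the factor $\alpha$ as well, and the straightforward analysis gives only $\alpha(1-2\alpha)\le\tfrac18$. Conflicts among requests with the same LCA are not addressed, and that is exactly where the $2/3$ star example lives. The tight cases $u_e\in\{1,2\}$ are deferred to an unspecified hand-check. Finally, $1-2\cdot\tfrac{3}{10}=\tfrac25$ leaves no room for the extra ``packing-lemma loss'' you mention.

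\textbf{The missing idea} is to separate the LP from the combinatorics.
\begin{enumerate}
  \item Attach two private capacity-one leaf edges to each request, make four copies of every request, and multiply all capacities by four.
  \item Apply the K\"onemann--Parekh--Pritchard rounding (Theorem~\ref{thm:kpp}). This gives an integral collection of copies with weight at least $4W$ and load at most $4u_e+2$ on every edge.
  \item Split these paths into ten classes, each with load at most $u_e$. The private edges force at most one copy of each request per class, so the heaviest class weighs at least $4W/10=2W/5$.
\end{enumerate}

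The ten-coloring (Theorem~\ref{thm:ten}) is a deterministic induction, not a randomized scheme:
\begin{itemize}
  \item It keeps bins of at most six paths at each leaf and contracts an attached star, with Lemma~\ref{lem:packing} repacking crossing blocks into at most $u_{vw}$ bins.
  \item It colors the local paths jointly, rather than greedily, via the precoloring-extension theorem of Edwards et al.\ (Lemma~\ref{lem:bin-completion}).
  \item When $u_{vw}\ge2$ and some bin holds at least five crossing paths, it moves that bin across $vw$, so the capacity of $vw$ drops by one.
\end{itemize}

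Your diagnosis of the $4/11$ bound is also off: the loss is not a union bound over two legs. It is $4$ copies divided by $11$ colors. Eleven colors are needed there because greedily coloring a local path whose two bins each contain five other paths can block ten colors.
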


Chekuri, Mydlarz, and Shepherd~\cite{CMS} proved that the integrality gap of this LP is at least \(1/4\). K\"onemann, Parekh, and Pritchard~\cite{KPP} improved this bound to \(1/3\) when every capacity is at least two, and obtained bounds approaching one as the minimum capacity grows.

\paragraph{Related work.} Garg, Vazirani, and Yannakakis~\cite{GVY} showed that the cardinality version is NP-hard and MAX SNP-hard on trees, although it can be solved exactly when every edge has capacity one. They also gave a primal--dual algorithm that produces an integral routing and a multicut, with the routing value at least half the total capacity of the multicut.

Cheriyan, Jord\'an, and Ravi~\cite{CJR} subsequently proved that every half-integral LP solution---that is, one with \(x_f\in\{0,\tfrac{1}{2},1\}\)---can be rounded while retaining at least \(2/3\) of its weight. The same fraction is also an upper bound on the integrality gap. On a three-leaf star with unit capacities and one request of weight one between each pair of leaves, assigning \(x_f=\tfrac{1}{2}\) to each request gives LP value \(3/2\), whereas the integral optimum is one. It is conjectured that the integrality gap is exactly \(2/3\).

For arbitrary nonnegative weights, Chekuri, Mydlarz, and Shepherd (CMS)~\cite{CMS} obtained the first constant lower bound on the integrality gap, \(1/4\), through a stronger \(4k\)-coloring theorem for paths in trees. K\"onemann, Parekh, and Pritchard~\cite{KPP} later gave an objective-preserving rounding with additive congestion at most two. In particular, if \(\mu=\min_{e\in E}u_e\ge2\), their results show that the integrality gap is at least
\[
  \max\left\{\frac{1}{3},\;
  \frac{\mu(\mu-1)}{(\mu+1)(\mu+2)}\right\}.
\]
Thus, before the present work, the best known lower and upper bounds on the integrality gap for arbitrary weights and arbitrary capacities were \(1/4\) and \(2/3\), respectively.

The tree decomposition underlying the CMS bound has also found applications beyond trees: Naves, Shepherd, and Xia~\cite{NSX} used it as a building block in their \(1/224\)-approximation for maximum-weight disjoint paths in capacitated outerplanar graphs. Separately, Pan and Goemans~\cite{PanGoemans} studied bicriteria algorithms for demand matching, which is the star special case with arbitrary demands.

\subsection*{The idea of the proof}

Let \(W=\sum_f w_fx_f\) be the weight of the fractional solution. Our starting point is a bicriteria rounding theorem of K\"onemann, Parekh, and Pritchard~\cite{KPP}: if we allow the load on each edge to exceed its capacity by at most two, we can select a subset of requests with total weight at least \(W\).

We apply this rounding theorem to a copied instance with scaled capacities. Make four \emph{copies} of every request and multiply every capacity by four. The resulting fractional objective value is \(4W\). Rounding then gives selected copies of total weight at least \(4W\), with at most \(4u_e+2\) selected copies whose paths use edge \(e\). We will partition the selected copies into eleven color classes satisfying two conditions:

\begin{enumerate}
    \item Each color class respects the original capacities \(u_e\); that is, each color class is feasible.
    \item Each color class contains at most one copy of any original request.
\end{enumerate}
The second condition prevents double counting when we return to the original instance. For example, suppose an original request has weight \(w_f=7\). If a color class contains two copies of this request, then its weight in the copied instance is \(14\), but those copies correspond to a single request of weight \(7\) in the original instance. Restricting each color class to at most one copy of every original request ensures that the two weights agree.

Thus, the heaviest color class has weight at least \(4W/11\). Section~\ref{sec:rounding} explains how to enforce the conditions and gives the full reduction.

The main work is the division into eleven colors. We first solve the coloring problem on a star. There a simple binning and greedy coloring works. To handle a general tree, we recurse, by removing a star at the end of the tree, color the smaller tree recursively, and then restore the remaining paths. The issue is to retain enough information about the old bins from the recursive call so that the colors still work when the star is restored.

We first carry out this argument exactly as stated and obtain the \(4/11\) bound. In Section~\ref{sec:ten-colors}, we refine the argument further and show that the paths can instead be colored together with only ten colors, improving the result to $4W/10=2W/5$. 

\section{Reduction to a coloring problem}
\label{sec:rounding}

We first restate the rounding theorem in the form needed here.

\begin{theorem}[K\"onemann--Parekh--Pritchard~\cite{KPP}]
\label{thm:kpp}
Consider unit-demand requests with nonempty paths in a tree with positive integer capacities \(c_e\) and nonnegative weights \(w_f\). Let \(x\) be a fractional solution satisfying
\[
  0\le x_f\le1\quad\text{for every }f,
  \qquad
  \sum_{f:e\in\Path(f)}x_f\le c_e\quad\text{for every }e.
\]
One can find, in polynomial time, a subset \(J\) of the requests such that
\[
  \sum_{f\in J}w_f\ge\sum_f w_fx_f
  \qquad\text{and}\qquad
  |\{f\in J:e\in\Path(f)\}|\le c_e+2
  \quad\text{for every }e.
\]
\end{theorem}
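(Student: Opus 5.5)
This is the additive-congestion-two rounding of K\"onemann, Parekh, and Pritchard, and my plan is to follow their iterated LP rounding argument. I start from an extreme point solution rather than the given \(x\). The polytope \(\{x:0\le x\le 1,\ \sum_{f\ni e}x_f\le c_e\}\) is bounded, and the objective \(\sum_f w_fx_f\) is linear, so I can replace \(x\) by a basic feasible solution of at least the same value, found in polynomial time. Every request with \(x_f=1\) goes into \(J\): I subtract its path from the residual capacities and remove it from the instance. Every request with \(x_f=0\) is discarded. These steps never lose objective value.

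The invariant I maintain is that each edge \(e\) has a residual capacity, and the residual LP is feasible with value at least the weight still to be collected. At some steps I will drop an edge constraint entirely. If I drop the constraint of \(e\) only when at most \(c'_e+2\) surviving requests still use \(e\), where \(c'_e\) is the current residual capacity, then whatever is later accepted on \(e\) keeps the total load at most \(c_e+2\). Dropping constraints only relaxes the LP, so the optimum does not decrease. The algorithm alternates between re-solving for an extreme point, fixing the integral variables, and dropping constraints. Since each round removes a variable or a constraint, there are polynomially many rounds.

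The crux is the counting lemma. At any extreme point where every remaining variable is strictly fractional, I need some tight edge constraint used by at most \(c'_e+2\) fractional requests. Such an edge can always be dropped safely: its fractional load is an integer \(c'_e\), and its count is at most \(c'_e+2\), so its final load is at most \(c_e+2\). By the rank argument, the fractional variables are determined by a linearly independent family of tight edge constraints, and I can uncross this family into a laminar one on the tree. I would then run a token-counting argument. Each fractional request gives one token in total to the tight edges it uses, for example split between the edges nearest its two endpoints in the chosen family. Suppose every tight edge in the family were used by at least \(c'_e+3\) requests. Since the sum of \(x_f\) over those requests equals \(c'_e\) and each \(x_f<1\), every such edge has at least three more requests than its integral load. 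From this I want to show that the family receives more tokens than its size, which contradicts the count from linear independence.

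I expect the main obstacle to be the token distribution. Tree paths sharing edges are not laminar in the interval sense, and paths with an endpoint at an internal vertex complicate things further. I would first try the KPP approach of rooting the tree and charging each path to its two extreme edges, then use the tree structure of the independent edge set to bound collisions. If this fails, I would fall back on citing the result directly, since the paper states it as a theorem from~\cite{KPP} and uses it as a black box.
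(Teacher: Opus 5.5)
\textbf{The paper does not prove this statement.} It imports it as a black box from Section~2 of~\cite{KPP}, remarking only that labeled requests allow repeated paths. Your fallback of citing the result is therefore exactly the paper's treatment.

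\textbf{The gap in your self-contained route.} The scaffolding is sound: extreme points, fixing integral variables, and a drop rule that caps the final load on \(e\) at \((c_e-c'_e)+(c'_e+2)\). But the only nontrivial step is missing. You never prove that every all-fractional extreme point has a non-dropped tight edge carrying at most \(c'_e+2\) fractional requests, and without that lemma the iteration can stall. Two parts of your plan for it also need adjusting:
\begin{itemize}
  \item There is nothing to uncross, since each constraint is a single edge. The rank lemma directly gives a set \(E'\) of tight, non-dropped edges with linearly independent rows and \(|E'|=n\), the number of fractional requests.
  \item Charging each request's token to its two extreme edges does not obviously give every edge of \(E'\) a full token. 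An edge \(e=v_1v_2\) gets nothing from requests that end at \(v_1\) or \(v_2\) without using \(e\), so no local per-edge bound is available and the count has to be global.
\end{itemize}

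\textbf{How to close it.} Counting endpoints at vertices works, and even yields an edge with at most three fractional requests.
\begin{enumerate}
  \item Contract every edge outside \(E'\). This gives a tree \(T'\) with edge set \(E'\). Every request becomes a nonempty path in \(T'\), because the square nonsingular constraint matrix has no zero column.
  \item Suppose every edge of \(T'\) carries at least four requests. Then at least four requests end at each leaf of \(T'\).
  \item Let \(v\) have degree two, with edges \(e_1,e_2\). If no request ends at \(v\), the rows of \(e_1\) and \(e_2\) coincide, contradicting independence. If exactly one request \(f\) ends there, subtracting the two tight constraints gives \(x_f=\pm(c'_{e_1}-c'_{e_2})\), an integer. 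Hence at least two requests end at each degree-two vertex.
  \item Let \(T'\) have \(L\) leaves, \(D_2\) vertices of degree two, and \(D_3\) vertices of larger degree. The degree sum gives \(D_3\le L-2\). Counting the \(2n\) endpoints then yields
\[
  4L+2D_2\le 2n=2(L+D_2+D_3-1)\le 4L+2D_2-6,
\]
a contradiction.
\end{enumerate}
So some tight edge carries at most three fractional requests. Since \(c'_e\) is a positive integer, \(3\le c'_e+2\), and with this lemma your iteration proves the theorem.
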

The proof appears in Section~2 of~\cite{KPP}. Because requests are labeled, the theorem also applies when several requests use the same path.

\paragraph{Encoding the copy restriction as an edge capacity.} One of the conditions that we must ensure is that each color class contains at most one copy of any original request. We encode this restriction using capacity-one edges. For each original request \(f\) with endpoints \(s\) and \(t\), attach a new leaf to \(s\) and another to \(t\), assigning capacity one to both new edges. Extend \(f\) so that its endpoints are these two new leaves; see Figure~\ref{fig:private}.

Before copying, the two new edges are used only by \(f\); after copying, they are used only by the four copies of \(f\). Even if two original requests have identical endpoints, we give them separate pairs of new leaves.

\begin{figure}[htb]
\centering
\begin{tikzpicture}[x=1cm,y=1cm,
  dot/.style={circle,fill,inner sep=1.8pt}]
  \node[dot,label=above:{new leaf}] (a) at (0,0) {};
  \node[dot,label=above:{$s$}] (s) at (2,0) {};
  \node[dot,label=above:{$t$}] (t) at (6,0) {};
  \node[dot,label=above:{new leaf}] (b) at (8,0) {};
  \draw[thick] (a)--node[below] {capacity $1$}(s);
  \draw (s)--(2.7,0);
  \draw[densely dotted] (2.7,0)--(5.3,0);
  \draw (5.3,0)--(t);
  \draw[thick] (t)--node[below] {capacity $1$}(b);
  \node at (4,-0.55) {original path of $f$};
\end{tikzpicture}
\caption{One pair of new leaves is added for each original request. All copies of that request share the two new capacity-one edges.}
\label{fig:private}
\end{figure}

The extended requests retain their weights and traverse the same original edges as before. Each new edge is used only by request \(f\), so the corresponding capacity constraint is exactly \(x_f\le1\). Hence the construction preserves every feasible fractional solution and its value. If the endpoints of a request coincide, \(s=t\), we attach two distinct leaves at that vertex; its extended path is then a nonempty two-edge path.

Sharing the private edges among the copies turns the one-copy-per-color-class restriction into an edge-capacity constraint. Fix a feasible solution \(x\) of~\eqref{eq:lp}, and let \(W=\sum_f w_fx_f\). Let \(u_e\) denote the unscaled capacity of every edge in the enlarged tree, including \(u_e=1\) on each private edge. Create four labeled copies of each extended request, each with weight \(w_f\) and fractional value \(x_f\), and scale every edge capacity to \(4u_e\). The resulting fractional solution is feasible because, for every edge \(e\),
\[
  \sum_{f:e\in\Path(f)}(x_f+x_f+x_f+x_f)
  =4\sum_{f:e\in\Path(f)}x_f
  \le4u_e.
\]
Its fractional objective value is \(4W\). Applying Theorem~\ref{thm:kpp} yields selected copies of total weight at least \(4W\), with at most \(4u_e+2\) selected copies whose paths use edge \(e\). We recover feasibility using the following coloring theorem, proved in Section~\ref{sec:trees}. For the coloring argument, we identify each selected copy with its labeled path.

\begin{theorem}\label{thm:eleven}
Let a tree have positive integer capacities \(u_e\). Consider a finite collection of labeled paths, each joining two distinct leaves. Repeated paths are allowed and are treated as distinct labeled objects. If at most \(4u_e+2\) paths use each edge \(e\), then each path can be assigned one of eleven colors so that, for every edge \(e\) and every color, at most \(u_e\) paths of that color use \(e\).
\end{theorem}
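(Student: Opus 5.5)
We prove Theorem~\ref{thm:eleven} by induction on the number of edges, following the outline in the introduction: first handle a star by binning and greedy coloring, then peel a star off the end of a general tree and recolor.

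\textbf{The star case.} Let the center be \(v\) with leaf edges \(e_1,\dots,e_k\). Each path uses exactly two leaf edges. At edge \(e_i\), sort the paths arbitrarily and cut them into \emph{bins} of \(u_i\) consecutive paths. Since the load is at most \(4u_i+2\), there are at most four full bins plus one remainder bin of at most two paths; when \(u_i=1\) this gives at most six singleton bins. The target is that every color meets each bin at most once \emph{per edge class}. Concretely, we want a coloring in which two paths sharing an edge \(e_i\) may have the same color only if they lie in different bins.

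It is cleaner to think of it as follows. Contract each bin at each leaf edge to a vertex. Each path then becomes an edge between two bin-vertices, so we get a multigraph \(H\). We need a coloring of the paths such that each color class has at most \(u_i\) paths through \(e_i\). Since every bin has size \(\le u_i\), it suffices that, at each edge \(e_i\), every color is used by paths from at most one bin. I would instead aim for a direct greedy count. Process paths one by one. A path \(p\) through \(e_i\) and \(e_j\) is blocked from color \(c\) if color \(c\) already holds \(u_i\) paths at \(e_i\), or \(u_j\) at \(e_j\). The number of colors saturated at \(e_i\) by the other at most \(4u_i+1\) paths is at most \(\lfloor (4u_i+1)/u_i\rfloor\le 5\). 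The same bound holds at \(e_j\), giving at most \(10\) blocked colors, so eleven colors suffice. This is exactly where \(11=2\cdot 5+1\) comes from, and it explains the number.

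\textbf{General trees.} Root the tree and pick a deepest non-leaf vertex \(v\). All its children are leaves, and it has a parent edge \(g\). Paths through \(v\) either go between two children of \(v\) (local paths) or use \(g\) and continue upward. Replace the star at \(v\) by contracting: delete the children of \(v\), making \(v\) a leaf. Truncate every upward path at \(v\), and discard the local paths for now. The smaller tree still satisfies the load hypothesis, though \(v\) is a leaf and truncated paths end at it. Either the statement is generalized to allow paths ending at a common leaf, or we attach distinct fresh leaves below \(v\) with a large capacity. By induction, color the upward paths. Then restore the child edges and color the local paths greedily, as in the star case. A local path sees only two child edges, each with load \(\le 4u+2\), so at most \(10\) colors are blocked.

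\textbf{Main obstacle.} The difficulty is the child edges used by upward paths. Their colors were fixed by the recursion without seeing the child edges, so a child edge \(e_i\) may already have more than \(u_i\) upward paths of one color. The fix I would attempt is to strengthen the induction hypothesis so that the recursion respects a prescribed partition at its leaf edges. Specifically, the upward paths through \(g\) are pre-grouped according to the child edge they use and binned by size \(u_i\). In the inductive statement we demand that paths in the same "bin" at the newly created leaf receive distinct colors, and more generally that a color appears at most \(u_i\) times among the paths using \(e_i\). This is realized by attaching, in the contracted tree, a private leaf edge of capacity \(u_i\) for each child \(i\) below \(v\). Then the contracted tree is again an instance of the theorem with the same loads, and nothing is lost. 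If that bookkeeping works, the induction closes with the same \(10\)-blocked-colors count. I expect checking that this contraction preserves the \(4u_e+2\) hypothesis and the leaf-endpoint condition to be the delicate step.
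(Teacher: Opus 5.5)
Your star case is fine: the saturation count $\lfloor (4u_i+1)/u_i\rfloor\le 5$ per leaf edge gives at most ten blocked colors. You also locate the real difficulty correctly: after contracting the star at $v$, the recursive coloring of the crossing paths must still respect the capacities $u_i$ of the child edges $e_i$. Your proposed fix, however, does not resolve this difficulty.

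\textbf{The fix is circular.} Attaching below $v$ a private leaf edge of capacity $u_i$ for each child $i$ simply rebuilds the original tree minus the local paths. Then $v$ is internal again and the number of edges is unchanged. Once the local paths are gone, the same step reproduces the same instance, so the induction never makes progress.

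\textbf{The binning is inconsistent.} Bins of size $u_i$ with distinct colors inside each bin only bound a color's multiplicity on $e_i$ by the number of bins, which can be five; for $u_i=1$ the condition is vacuous. Your earlier condition, that each color is used by at most one bin at $e_i$, is a different condition. It cannot be handed to the single new leaf edge $g$, where all children's paths now end.

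\textbf{Nested constraints do not fix it.} You might instead carry the child-edge constraints as extra constraints at the new leaf. These nest and accumulate under later contractions: a path becomes subject to limits at $e_i$, at $g$, and so on at each endpoint. The greedy count for local paths then no longer stays at ten.

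The missing idea is a flat invariant that can be re-established at the contracted leaf. The paper strengthens the statement to Theorem~\ref{thm:bins}, which has three parts:
\begin{itemize}
\item at each leaf, the paths ending there are partitioned into prescribed bins of at most six paths;
\item a leaf with edge $e$ has at most $u_e$ bins;
\item paths in the same bin must get distinct colors.
\end{itemize}
This enforces capacity $u_e$ on leaf edges and makes greedy completion a $5+5$ count. Initially the bins exist because $4u+2\le 6u$.

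When the star at $v$ is contracted, each old bin restricted to crossing paths becomes an indivisible block of size at most six. The blocks have total size at most $4u_{vw}+2$. The packing lemma (Lemma~\ref{lem:packing}) packs them into at most $u_{vw}$ new bins of size at most six at the new leaf $v$. Keeping each block intact preserves distinctness within every old bin, and hence the child-edge capacities. Bounding the number of new bins by $u_{vw}$ enforces the capacity of $vw$. The local paths are then reinserted greedily, since each lies in two bins with at most five other paths each. Without something playing the role of this re-binning step, your induction does not close.
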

For each private leaf edge, the scaled capacity in the KPP instance is four, whereas its target, unscaled capacity in Theorem~\ref{thm:eleven} is \(u_e=1\). Hence each color class contains at most one copy of any original request, because all four copies of that request use its private edges. Replacing the copies in any color class by their corresponding original requests yields a feasible subset of the original instance with the same weight. The eleven color classes have total weight at least \(4W\), so at least one has weight at least \(4W/11\). This proves Theorem~\ref{thm:eleven-gap}, assuming Theorem~\ref{thm:eleven}.

From now on, we work only with these labeled paths on the enlarged tree. We retain the notation \(u_e\) for the target edge capacities, including \(u_e=1\) on every private edge. Every such path joins two distinct leaves.

\section{The coloring argument}
\label{sec:trees}

Consider a leaf \(a\) whose incident edge has capacity two, and suppose ten paths end at \(a\). We seek a coloring in which no color appears more than twice among these paths. Place six paths in one ``bin'' and the remaining four in another, and require the paths in each bin to receive distinct colors. A color may then appear once in each bin, and hence at most twice in total.

The same idea works for any capacity. A \emph{bin at a leaf} is a subset of the paths ending there. We partition these paths into bins and require pairwise distinct colors within each bin. If the leaf edge has capacity \(u\), using at most \(u\) bins ensures that at most \(u\) paths of any one color use the edge.

We use bins of size at most six. Such bins can always be formed initially: if a leaf edge has capacity \(u\), then at most \(4u+2\le6u\) paths end at that leaf, where the inequality follows from \(u\ge1\). Order these paths arbitrarily and partition them into consecutive bins, each containing six paths except possibly the last. Thus at most \(u\) bins are needed. If no path ends at the leaf, no bin is created. Each path belongs to exactly one bin at each of its two endpoints.

We now state the precise inductive claim. It is slightly stronger than Theorem~\ref{thm:eleven}, because the bins at the leaves are part of the input and the coloring must respect them. A vertex is \emph{internal} if it has degree at least two. An edge is \emph{internal} if both endpoints are internal vertices. All other edges are incident to at least one leaf.

\begin{theorem}\label{thm:bins}
Let a tree have positive integer edge capacities \(u_e\), and let a finite collection of labeled paths, each joining two distinct leaves, be given. At each leaf, suppose the paths ending there are partitioned into prescribed bins. Assume that
\begin{enumerate}[label=\textup{(\roman*)}]
  \item every bin contains at most six paths;
  \item a leaf with incident edge \(e\) has at most \(u_e\) bins;
  \item every internal edge \(e\) is used by at most \(4u_e+2\) paths.
\end{enumerate}
There is an eleven-coloring such that the paths within each prescribed bin receive pairwise distinct colors and, for every edge \(e\), at most \(u_e\) paths of any given color use \(e\).
\end{theorem}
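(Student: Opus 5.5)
Induction on the number of internal vertices, following the plan sketched in the introduction: remove a star at the end of the tree, recurse, and restore it.

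\emph{Base case: the tree is a star.} Let the center be \(c\). Every edge is a leaf edge, and each path uses exactly two leaf edges, one at each endpoint. It suffices to give every bin pairwise distinct colors. Indeed, the leaf edge \(e\) then carries at most \(u_e\) bins, so each color appears at most \(u_e\) times on \(e\). Build a conflict graph whose vertices are the paths, with two paths adjacent when they share a bin. Each path lies in one bin at each endpoint, so it has at most \(5+5=10\) neighbours. A greedy eleven-coloring therefore exists. If some tree has no internal edge, it is a star (or a single edge, which is trivial).

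\emph{Inductive step.} Choose an internal vertex \(v\) all of whose neighbours except one, say \(p\), are leaves; such a \(v\) exists by taking an end of a longest path among internal vertices. Let \(g=vp\) be the internal edge. Contract the star at \(v\) into a single new leaf \(v'\) hanging off \(p\) by the edge \(g\) with capacity \(u_g\). Paths with both endpoints among the leaves of \(v\) are set aside. Every other path using the star now ends at \(v'\). There are at most \(4u_g+2\le 6u_g\) of them.

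The crucial choice is the binning at \(v'\). For each of these paths, record its old bin \(B\) at its leaf below \(v\). We want to partition the paths at \(v'\) into at most \(u_g\) bins of size at most six so that the recursive coloring, once restored, still respects the old bins at the leaves of \(v\). My plan is to form the new bins greedily so that each new bin mixes paths from different old bins. After recursion, the colors forced on these paths are distinct within each new bin. The capacity on \(g\) then holds because there are at most \(u_g\) bins. The remaining leaf edges below \(v\) carry old bins whose paths now have colors fixed by the recursion.

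Two conflicts must be repaired. First, two paths in the same old bin may receive the same color if they landed in different new bins. Second, the set-aside paths inside the star must be colored. For the set-aside paths, the base-case greedy argument applies if each has at most ten colored neighbours in its two old bins, which holds since each old bin has at most six members. For the first conflict, I would try to prevent it in advance. Put all paths of a given old bin \(B\) that pass through \(g\) into a single new bin whenever possible. This requires splitting the old bins so that each piece fits inside one new bin of size at most six. The approach would be to pack the pieces of old bins (of sizes at most six) into at most \(u_g\) new bins of capacity six, with total at most \(4u_g+2\). Taking this as the packing lemma the paper later proves, the bins can be arranged so that whenever an old bin is split, the two parts of the split never collide. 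One way to enforce this is to carry the old-bin constraint into the recursion as extra structure, or to permute colors afterwards.

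\emph{Main obstacle.} I expect the packing step to be the hardest. We need to bin the at most \(4u_g+2\) paths at \(v'\) into at most \(u_g\) bins of size at most six while never separating two members of the same old bin in a way that lets them share a color. Naive first-fit fails when the fragmentation is bad. I would first try ordering the old bins by size and packing them so that at most one old bin per new bin is split. Any split bin would then be repaired by a local color swap within the star, using the slack that eleven colors give over the ten neighbours of a path.
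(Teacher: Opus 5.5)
Your skeleton is the paper's: a greedy base case on a star with at most $5+5=10$ conflicts, contraction of an attached star, recursion, and greedy reinsertion of the set-aside paths. But the step you flag as the main obstacle is left unresolved, and that step is where the proof actually happens.

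\textbf{The missing idea: no old bin ever has to be split.} You already note that the crossing part of each old bin has at most six paths, so treat each such block as an indivisible item. The total size is the load on $g$, which is at most $4u_g+2$. Repeatedly merge two bins whose combined size is at most six; first-fit on intact blocks also works. At termination, any two bins have combined size at least $7$, by integrality. If $q\le 1$ bins remain, then $q\le u_g$ trivially. If $q\ge 2$ and the smallest bin has size $s$, the total is at least $4q$ when $s\ge 4$. When $s\le 3$, the total is at least $s+(q-1)(7-s)=4q-1+(q-2)(3-s)\ge 4q-1$. So $q\ge u_g+1$ would force a total of at least $4u_g+3$, a contradiction. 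Hence the blocks fit intact into at most $u_g$ bins of size at most six. Each block then lies in a single new bin, and the recursion gives its paths distinct colors. Your ``first conflict'' never occurs, and the set-aside paths are finished greedily exactly as you describe.

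\textbf{Why the proposed repairs do not work.} As written, your argument splits old bins and repairs collisions afterwards, and none of the repairs you suggest is justified.
\begin{itemize}
\item A crossing path's color is constrained outside the star: by its bin at the far endpoint, and by the per-color loads on every edge of its route, including $g$. A ``local color swap within the star'' on a crossing path can therefore violate constraints elsewhere in the tree.
\item The slack of eleven colors over ten neighbours applies only to the set-aside paths, whose constraints are confined to their two old bins.
\item A global permutation of colors cannot separate two paths that received the same color.
\item Carrying the old-bin constraint into the recursion would require a strengthened induction hypothesis, which you do not formulate.
\end{itemize}
Your remark that first-fit ``fails'' is also mistaken for intact blocks. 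With first-fit, any two final bins still have combined size at least $7$, and the same counting applies.

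The inductive step is incomplete until the intact packing argument above is supplied.
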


Theorem~\ref{thm:bins} immediately implies Theorem~\ref{thm:eleven}.
\begin{proof}[Proof of Theorem~\ref{thm:eleven}]
At each leaf, order the paths ending there arbitrarily and partition them into consecutive bins, each containing six paths except possibly the last. If the leaf edge has capacity \(u\), then at most \(4u+2\le6u\) paths end there, so at most \(u\) bins are needed. These bins satisfy conditions~(i) and~(ii) of Theorem~\ref{thm:bins}, while the load bound in Theorem~\ref{thm:eleven} gives condition~(iii). Applying Theorem~\ref{thm:bins} yields the desired eleven-coloring.
\end{proof}

The inductive proof of Theorem~\ref{thm:bins} uses a star-contraction argument similar to that of Chekuri, Mydlarz, and Shepherd~\cite[Section~2]{CMS}; we give the construction in full.

\begin{proof}[Proof of Theorem~\ref{thm:bins}]
We induct on the number of internal vertices. If there are no paths, there is nothing to color, so assume otherwise. The tree then has at least one edge.

\paragraph{Star or single-edge case.} If the tree has at most one internal vertex, it is a star or a single edge. If it is a star, fix the prescribed bins at the leaves and color the paths one at a time in any order. For a path being colored, each of its two endpoint bins contains at most five other paths. Previously colored paths in these bins therefore rule out at most ten colors in total, leaving at least one of the eleven colors available. Assign such a color and continue. At the end, the paths within each bin have pairwise distinct colors. For any leaf edge \(e\), condition~(ii) gives at most \(u_e\) bins at its leaf, and each bin contains at most one path of any given color. Hence at most \(u_e\) paths of each color use \(e\), so every edge-capacity constraint is satisfied. The same argument applies when the tree consists of a single edge, since each path still has two leaf endpoints and belongs to one bin at each endpoint.

\paragraph{Case of at least two internal vertices.} We may now assume that the tree has at least two internal vertices. These vertices induce a tree, since every vertex strictly between two internal vertices on their unique path also has degree at least two. Choose a leaf \(v\) of this induced tree. In the current tree, \(v\) has exactly one internal neighbor, denoted \(w\), and all its other neighbors are leaves. The edges from \(v\) to these leaves form a star attached to the rest of the tree through \(vw\). Figure~\ref{fig:star} shows an example with three leaves \(a,b,c\) at \(v\).

\begin{figure}[htb]
\centering
\begin{tikzpicture}[x=1cm,y=1cm,
  dot/.style={circle,fill,inner sep=1.8pt},>=Stealth]
  \node[dot,label=left:{$a$}] (a) at (0,1.1) {};
  \node[dot,label=left:{$b$}] (b) at (0,0) {};
  \node[dot,label=left:{$c$}] (c) at (0,-1.1) {};
  \node[dot,label=below:{$v$}] (v) at (1.5,0) {};
  \node[dot,label=below:{$w$}] (w) at (3,0) {};
  \node[dot,label=right:{$d$}] (d) at (4.3,0) {};
  \draw (a)--node[above] {$1$}(v);
  \draw (b)--node[above] {$1$}(v);
  \draw (c)--node[below] {$1$}(v);
  \draw[thick] (v)--node[above] {$2$}(w);
  \draw (w)--node[above] {$2$}(d);
  \draw[->] (4.9,0)--(5.8,0);
  \node[align=center,font=\small] at (5.35,-0.7)
    {remove local paths;\\contract $va,vb,vc$};
  \node[dot,label=below:{$v$}] (vv) at (6.5,0) {};
  \node[dot,label=below:{$w$}] (ww) at (8,0) {};
  \node[dot,label=right:{$d$}] (dd) at (9.3,0) {};
  \draw[thick] (vv)--node[above] {$2$}(ww);
  \draw (ww)--node[above] {$2$}(dd);
\end{tikzpicture}
\caption{A star attached through one edge. Edge labels indicate capacities for the running example. A path from \(a\) to \(b\) is local and is temporarily removed. A path from \(a\) to \(d\) crosses \(vw\) and becomes a path from \(v\) to \(d\) in the smaller tree.}
\label{fig:star}
\end{figure}

Recall that all path endpoints are leaves. A path with both endpoints among the leaves adjacent to \(v\) lies entirely within the star; call it a \emph{local path}. A path with exactly one endpoint among these leaves traverses one star edge and the edge \(vw\); call it a \emph{crossing path}.

Temporarily remove the local paths and contract the star edges, merging \(v\) and its adjacent leaves into a single vertex, again denoted \(v\). Retain the edge \(vw\) with the same capacity. After the contraction, \(v\) is a leaf and every crossing path has \(v\) as an endpoint. All other surviving paths are unchanged, and the resulting tree has fewer internal vertices.

Next, construct the bins at the new leaf \(v\). Let \(B_1,\ldots,B_k\) be the prescribed bins at the leaves adjacent to \(v\). Remove the local paths from each \(B_i\), discard any empty remainders, and relabel the nonempty crossing blocks as \(B'_1,\ldots,B'_m\). Each block has size at most six, and we treat it as an indivisible item. Every path using \(vw\) is a crossing path, and every crossing path belongs to exactly one block; hence the total size of the blocks equals the load on \(vw\). This load is at most \(4u_{vw}+2\) by condition~(iii), since \(vw\) was internal before the contraction. To apply the induction hypothesis, we must pack these items into at most \(u_{vw}\) new bins, each of total size at most six. Lemma~\ref{lem:packing} shows that this is possible: begin with one new bin for each crossing block and repeatedly merge two bins whose combined size is at most six until no such pair remains. At every other leaf, retain the prescribed bins.

We now verify the induction hypotheses for the smaller tree. By Lemma~\ref{lem:packing}, each new bin at \(v\) has size at most six, and there are at most \(u_{vw}\) such bins. Thus they satisfy conditions~(i) and~(ii). The bins at every other leaf are unchanged. Every edge that remains internal was internal before the contraction, and its load is unchanged because the removed local paths used only the contracted star edges. Thus condition~(iii) continues to hold. Every surviving path still joins two distinct leaves: a crossing path now joins \(v\) to its original endpoint outside the star. Finally, the contraction turns \(v\) from an internal vertex into a leaf and creates no new internal vertex. The smaller tree therefore has fewer internal vertices, so the induction hypothesis supplies the required coloring.

Undo the contraction while retaining the colors of the surviving paths. In each prescribed bin at a leaf adjacent to \(v\), the crossing paths already have pairwise distinct colors. Indeed, the corresponding crossing block was kept intact within a single new bin, whose paths received distinct colors in the recursive coloring.

Now reinsert the local paths one at a time. Each local path belongs to one prescribed bin at each of its two endpoints, and each of these bins contains at most five other paths. The already colored paths in the two bins therefore exclude at most ten colors, leaving at least one of the eleven colors available. Assign such a color and continue. After all local paths have been inserted, the paths in every prescribed bin have pairwise distinct colors.

It remains to verify the edge capacities. For each restored star edge \(e\), condition~(ii) gives at most \(u_e\) prescribed bins at its leaf, and each bin contains at most one path of any given color. Thus at most \(u_e\) paths of each color use \(e\). On every retained edge, including \(vw\), the load of each color class is unchanged from the recursively colored instance because local paths use only the restored star edges. Hence every edge-capacity constraint is satisfied, completing the induction step.
\end{proof}

\section{The algorithm}
\label{sec:algorithm}

The preceding proofs yield the following algorithm. Solve the LP, add the two private leaves for each original request, and create four copies of each extended request. Apply the rounding algorithm of~\cite{KPP} after multiplying every edge capacity by four. Regard the selected copies as the collection of paths to be colored, form the initial bins at the leaves, and repeatedly contract an attached star. At each step, record the pre-contraction bins and the local paths that are removed. When the remaining tree is a star or a single edge, greedily color its remaining paths. Then undo the contractions in reverse order, coloring the local paths as they are restored. Finally, choose a maximum-weight color class and replace each copy in it with the corresponding original request.

All these operations can be performed in polynomial time. The enlarged tree adds only two leaves per original request, and the copied instance contains only four copies of each request. Each contraction reduces the number of internal vertices. In each application of the packing lemma, every merge reduces the number of nonempty bins. Path lists, bin memberships, and sets of available colors can be maintained explicitly in polynomial time using polynomial space. Because only nonempty bins are stored, the representation does not grow in proportion to the numerical values of the capacities. This establishes the algorithmic claim in Theorem~\ref{thm:eleven-gap}.

\section{Saving the eleventh color}
\label{sec:ten-colors}

When we restored a contracted star in the recursion, its crossing paths already had colors, and we colored the local paths one at a time. Each local path belonged to two bins, and each bin contained at most five other paths. The colors already used in these two bins could therefore exclude as many as $5+5=10$ colors. With eleven colors available, there was always a color left. With ten colors, this greedy step may get stuck.

Some of the colors blocking a path, however, were assigned to other local paths. Those choices can be changed. Only the crossing paths have colors that we must keep: their colors came from the recursive call, and changing them could create a conflict elsewhere in the tree. Thus we have more freedom if we choose the colors of all the local paths together. The following consequence of an edge-coloring theorem of Edwards et al.~\cite{EGHKPS} makes this useful.

\begin{lemma}\label{lem:bin-completion}
Consider the bins at the leaves of an attached star. Every bin contains at most six paths. Suppose the crossing paths have already been assigned colors from a set of ten colors, with distinct colors within each bin. We can color all the local paths using the same ten colors, without changing any crossing color, provided that either
\begin{enumerate}[label=\textup{(\alph*)}]
  \item every bin contains at most four crossing paths; or
  \item there are at most six crossing paths altogether, and their colors are all different.
\end{enumerate}
The resulting coloring gives distinct colors to the paths in each bin. We can do this in polynomial time.
\end{lemma}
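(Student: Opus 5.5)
The plan is to recast the completion problem as extending a precoloured edge-colouring of a multigraph, and then invoke the extension theorem of Edwards et al.~\cite{EGHKPS} as a black box. Build an auxiliary multigraph \(H\) with one vertex for each bin at the leaves of the attached star. Each local path joins two distinct leaves and lies in exactly one bin at each endpoint. So it becomes an edge between those two bin-vertices; there are no loops, since the two endpoints are different leaves. A colouring of the local paths that is distinct within each bin is then exactly a proper edge-colouring of this part of \(H\).

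The crossing paths must appear in \(H\) as precoloured edges, and the two hypotheses suggest different encodings.
\begin{itemize}
\item Under (a), attach each crossing path as a pendant edge from its bin-vertex to a fresh private vertex, precoloured with its given colour.
\item Under (b), add a single new vertex \(z\) and join \(z\) to the bin of each crossing path, precoloured with its given colour. Since there are at most six crossing paths with pairwise distinct colours, \(z\) has degree at most six and the precoloured edges at \(z\) are properly coloured.
\end{itemize}
In both cases every bin-vertex has degree equal to its bin size, hence at most six, and so \(\Delta(H)\le6\). The precolouring is proper at every vertex, because crossing colours are distinct within each bin. Ten colours equals \(\lfloor 3\Delta/2\rfloor+1\) for \(\Delta=6\), one more than Shannon's bound. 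Extending the precolouring to all of \(H\) with the ten colours therefore gives the required colouring, and the extension leaves the crossing colours unchanged.

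The key step, and the main obstacle, is to match the precoloured edge sets to the exact hypotheses of the Edwards et al.\ theorem. That theorem is presumably stated for precoloured sets of a restricted shape: a matching at some minimum distance, a collection of pendant edges, or the edges at a single vertex, with a colour budget of the form \(\lfloor 3\Delta/2\rfloor\) plus a small additive constant.

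For (b) the precoloured edges form a single properly coloured star at \(z\). I expect this to fit a ``precoloured star'' or ``precoloured edges at one vertex'' extension statement directly.

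For (a) the pendant-edge encoding may not fit the theorem as stated. If it does not, I would first try a list-colouring reformulation. Each local edge \(xy\) gets the list of colours not used by crossing paths at \(x\) or \(y\), of size at least \(10-c(x)-c(y)\), where \(c(\cdot)\) counts crossing paths in a bin. Its endpoints have local degrees at most \(6-c(x)\) and \(6-c(y)\). The bound \(c\le4\) then guarantees at least two free colours per edge beyond the crude count, which is the slack a Shannon-type list extension result needs. Failing that, I would split each bin-vertex with \(c\ge1\) so that the precoloured edges become a matching with large mutual distance, which is the setting of the Edwards et al.\ distance condition.

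Finally, the polynomial running time should follow because the Edwards et al.\ proof is constructive, via Kempe-chain style recolourings. The multigraph \(H\) has size linear in the number of paths at the star, so each completion step costs polynomial time.
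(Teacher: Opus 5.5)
Your auxiliary multigraph and both encodings are exactly the paper's: pendant precoloured edges for (a), one common vertex $z$ for (b). But the proposal stops at the one step where the paper does its work.

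\textbf{Case (b).} You are missing a simple idea, and no ``precoloured star'' extension theorem is needed. Colour the whole multigraph, $z$ included, from scratch. Its maximum degree is at most six, so Shannon's $\lfloor 3\cdot 6/2\rfloor=9\le 10$ colours suffice; the paper instead applies the Edwards et al.\ theorem with nothing precoloured. The edges at $z$ then receive distinct colours. The prescribed crossing colours are also distinct, so one global permutation of the ten colours carries the first set onto the second and keeps the colouring proper.

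\textbf{Case (a).} The paper simply records that the pendant-edge multigraph has maximum degree at most six and at most four precoloured edges per vertex. That is the hypothesis under which it invokes Theorem~9 of Edwards et al.\ with ten colours. You leave this unverified, and your fallbacks would not close the gap.
\begin{itemize}
\item The Shannon-type list bound $|L(xy)|\ge\max\{d(x),d(y)\}+\lfloor\min\{d(x),d(y)\}/2\rfloor$ fails exactly when a bin has four crossing paths. Suppose $x$ has no crossing path and six local edges, while $y$ has four crossing paths and two local edges. Then $xy$ has six available colours, but the bound demands seven. So the ``two free colours'' of slack you claim are not there.
\item Splitting a bin vertex is illegitimate, because it drops the requirement that all paths in that bin receive distinct colours.
\end{itemize}

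\textbf{A correct elementary substitute for (a).} Use degree-list-colouring of the line graph of the local multigraph. Writing $c(\cdot)$ for crossing paths and $d'(\cdot)$ for local degree, we have $|L(xy)|\ge 10-c(x)-c(y)\ge d'(x)+d'(y)-2$, which is at least the degree of $xy$ in that line graph. By the Borodin/Erd\H{o}s--Rubin--Taylor theorem, a failure forces the following:
\begin{itemize}
\item Equality must hold everywhere on some component, and that component is a Gallai tree. Equality also excludes parallel local paths.
\item Equality at an edge meeting a vertex of local degree one would need $c=5$. Hence every local degree is at least two.
\item The component is then a cycle, since a vertex of degree three on a cycle yields a block that is neither complete nor a cycle. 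Being a Gallai tree, it is an odd cycle.
\item On that cycle every bin holds four crossing paths, and every list is the same pair of colours. This would make the forbidden $4$-sets alternate around an odd cycle, which is impossible.
\end{itemize}

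\textbf{Running time.} Your polynomial-time claim via ``Kempe-chain style'' recolourings is unsupported. The paper instead runs the Chleb\'ik--Chleb\'ikov\'a precolouring-extension algorithm on the line graph, whose maximum degree is at most ten.
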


The proof of the lemma is deferred to Appendix~\ref{sec:specialization}, but we note that it is essentially a specialization of the edge-coloring theorem of Edwards et al.~\cite{EGHKPS}. 

\begin{theorem}\label{thm:ten}
Under the assumptions of Theorem~\ref{thm:bins}, ten colors suffice: the paths in each prescribed bin receive distinct colors, and each edge \(e\) is used by at most \(u_e\) paths of any one color.
\end{theorem}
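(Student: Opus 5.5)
The plan is to rerun the induction in the proof of Theorem~\ref{thm:bins} with ten colors instead of eleven. The base case is the star or single-edge case, and the contraction step is unchanged except for how the new bins at \(v\) are formed. The only place where eleven colors were used beyond the base case is the greedy reinsertion of local paths. There I would instead invoke Lemma~\ref{lem:bin-completion}, which colors all local paths of the restored star simultaneously with the crossing colors held fixed. The edge-capacity verification then goes through word for word: restored star edges are controlled by condition~(ii) and distinctness within bins, and retained edges keep their color loads.

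\emph{Base case.} A star with no crossing paths falls under case~(a) of Lemma~\ref{lem:bin-completion} trivially, since every bin has zero crossing paths. So ten colors suffice there. A single edge is also immediate, since each path then lies in one bin at each of the two leaves.

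\emph{Induction step.} After contracting the attached star, I would split on the load \(L\) of \(vw\).
\begin{itemize}
\item If \(L\le 6\), put all crossing blocks into a single new bin at \(v\). This satisfies (i) and (ii) since \(u_{vw}\ge1\). The recursive coloring then gives all crossing paths pairwise distinct colors, which is exactly hypothesis~(b) of Lemma~\ref{lem:bin-completion}.
\item If \(L\ge7\), pack the crossing blocks into at most \(u_{vw}\) bins of size at most six using Lemma~\ref{lem:packing}, exactly as before. Then aim for hypothesis~(a).
\end{itemize}
Recall the constraint on forming bins: each crossing block must stay intact inside one new bin. Only this guarantees that the crossing paths within a prescribed bin receive distinct colors, which both cases of the lemma assume.

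The main obstacle is the second case when some prescribed bin at the star contains five or six crossing paths, so that hypothesis~(a) fails. My first attempt would exploit that such a bin \(B\) then contains at most one local path. If \(B\) has no local path, it imposes no constraint on the local coloring and can be deleted before applying the lemma. If \(B\) has exactly one local path \(\ell\), the only role of \(B\) is to forbid the at most five crossing colors of \(B\) for \(\ell\).

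I would encode that restriction inside the structure handled by Lemma~\ref{lem:bin-completion}. The idea is to replace \(B\) by a modified bin that keeps only at most four of its crossing paths, and to handle the remaining forbidden colors through the other endpoint bin of \(\ell\) or through a small exchange argument. One candidate exchange is to swap the colors of two local paths along an alternating path in the two-color subgraph. If this reduction goes through, case~(a) applies. Otherwise, I would try to strengthen the invariant of Theorem~\ref{thm:bins} so that large crossing blocks are always placed in a new bin together with few other paths. That would give their colors extra structure, for example making them all distinct among a set of at most six, so that a local variant of case~(b) applies to them.

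Once the lemma applies in every case, the resulting coloring gives distinct colors within every prescribed bin, and the capacity argument is identical to the eleven-color proof. Polynomial time follows from the polynomial-time guarantee in Lemma~\ref{lem:bin-completion} together with the algorithm of Section~\ref{sec:algorithm}.
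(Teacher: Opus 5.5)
\textbf{There is a genuine gap: the one hard case is left open.}

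Your easy cases are sound. A star with no crossing paths is covered by Lemma~\ref{lem:bin-completion} vacuously. Your \(L\le 6\) case (one new bin at \(v\), then Lemma~\ref{lem:bin-completion}\textup{(b)}) slightly generalizes the paper's \(u=1\) case. For a single edge, however, ``immediate'' is not justified with ten colors, because the greedy count excludes up to \(5+5=10\) colors. Use instead that the bins form a bipartite multigraph of maximum degree six, or apply Lemma~\ref{lem:bin-completion} with no precolored paths.

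The case you leave unresolved is \(L\ge 7\) (so \(u_{vw}\ge2\)) with a prescribed bin \(B\) holding five or six crossing paths. Your first attempt cannot work there. Any scheme that recurses on the contracted tree and then colors local paths with the crossing colors frozen fails on this instance:
\begin{itemize}
\item \(v\) has leaves \(a,a'\), with \(u_{av}=u_{a'v}=1\) and \(u_{vw}=2\); \(w\) has one more leaf \(d\), with \(u_{wd}=2\).
\item There are five paths from \(a\) to \(d\), five from \(a'\) to \(d\), and one local path \(\ell\) from \(a\) to \(a'\).
\item There is one bin at each of \(a\) and \(a'\). At \(d\), one bin holds the \(a\)-paths and one holds the \(a'\)-paths.
\end{itemize}
All hypotheses hold; the load on \(vw\) is \(10=4\cdot2+2\). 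Contract the star at \(v\), which your plan permits. The two crossing blocks of size five must go to different new bins. The recursive call may then legitimately color the \(a\)-paths \(1,\dots,5\) and the \(a'\)-paths \(6,\dots,10\). Now \(\ell\) sees all ten colors. Since \(\ell\) is the only local path, no alternating-path exchange among local paths can free a color. Your fallback of strengthening the invariant is too vague to evaluate. In particular, putting each large block in a bin with few other paths does not exclude this coloring.

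\textbf{The missing idea} is to alter the instance \emph{before} recursing, so that the local path of \(B\) is colored jointly with \(B\)'s crossing paths. The paper does this as follows:
\begin{itemize}
\item Attach a new leaf \(b\) to \(w\) with \(u_{bw}=1\).
\item Move the endpoint \(a\) of every path in \(B\) to \(b\), and place all of \(B\) in one bin at \(b\).
\item Lower \(u_{av}\) and \(u_{vw}\) by one, deleting \(av\) if its capacity reaches zero.
\end{itemize}
At least five paths of \(B\) stop using \(vw\) and at most one starts. So the load on \(vw\) becomes at most \(4u-2=4(u-1)+2\), and \(u\ge2\) keeps the capacity positive.

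The capacity-one edge \(bw\) forces the recursion to give all of \(B\) distinct colors. Returning these paths to \(a\) therefore adds at most one path of each color to \(av\) and to \(vw\), which the restored capacities absorb. This step does not reduce the number of internal vertices. The paper therefore inducts instead on the number of edges plus total path length, which drops by at least four.
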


\begin{proof}
If there are no paths, there is nothing to color. If the tree is a star or a single edge, then all paths are local. So there are then no crossing paths, so condition~\textup{(b)} of Lemma~\ref{lem:bin-completion} holds vacuously.

For a larger tree, choose an attached star exactly as in Section~\ref{sec:trees}. Denote its center by \(v\), its only internal neighbor by \(w\), and the capacity of \(vw\) by \(u\). Recall that local paths join two leaves adjacent to \(v\), whereas crossing paths have one endpoint at such a leaf and leave the star through \(vw\).

\paragraph{When the same contraction works.} Suppose first that every bin at a leaf adjacent to \(v\) contains at most four crossing paths. Remove the local paths and contract the leaf edges at \(v\), just as before. Keep the crossing paths from each old bin together. Their total number is at most \(4u+2\), so Lemma~\ref{lem:packing} packs these groups into at most \(u\) new bins, each of size at most six. The resulting tree satisfies the same three conditions as in Section~\ref{sec:trees}; color it recursively with ten colors.

Restore the star, keeping the colors of the crossing paths. Crossing paths from the same old bin still have distinct colors, because they stayed together in a new bin. Each old bin has at most four such paths, so Lemma~\ref{lem:bin-completion}\textup{(a)} colors the local paths without changing those colors. This completes the coloring of the star. The bins enforce the capacities on its leaf edges, and the local paths use no other edges.

The same contraction also works whenever \(u=1\), even if a bin contains five or six crossing paths. In this case at most \(4\cdot1+2=6\) paths cross \(vw\), and the contraction puts all of them in a single bin. Consequently, the recursive coloring gives them all different colors. Lemma~\ref{lem:bin-completion}\textup{(b)} then colors the local paths, and the same capacity check applies.

\paragraph{When a bin contains at least five crossing paths.} The only case left is \(u\ge2\) with a bin at a leaf adjacent to \(v\) containing at least five crossing paths. Let \(a\) be the leaf carrying this bin, and call the bin \(B\). It has room for at most one local path, since a bin contains at most six paths altogether.

Attach a new leaf \(b\) to \(w\), on the other side of the connecting edge. Replace the endpoint \(a\) of every path in \(B\) by \(b\); keep its other endpoint and its label. These paths now follow the unique routes from \(b\) to their unchanged endpoints. Paths outside \(B\) are unchanged, including paths in other bins at \(a\).

Consider what this does to a crossing path in \(B\). Previously, it traveled from \(a\) through \(v\) and \(w\) before continuing to its other endpoint. Now it starts at \(b\), reaches \(w\) immediately, and follows the same route from there. It no longer uses \(vw\). A local path in \(B\) changes in the opposite way: its other endpoint is a leaf at \(v\), so its new route from \(b\) uses \(vw\). There are at least five paths of the first kind and at most one of the second. Thus the new load on \(vw\) is at most
\begin{align*}
  \text{old load on }vw-5+1
    &\le (4u+2)-5+1\\
    &=4u-2\\
    &=4(u-1)+2.
\end{align*}
We can therefore lower the capacity of \(vw\) from \(u\) to \(u-1\) and retain the required bound on its total load. Because \(u\ge2\), the new capacity is still positive.

Give the new edge \(bw\) capacity one, and place all paths of \(B\) in one bin at \(b\). At \(a\), remove \(B\) from the old list of bins and lower the capacity of \(av\) by one. The number of bins at \(a\) was at most \(u_{av}\); after removing \(B\), it is at most \(u_{av}-1\), exactly the new capacity of \(av\). If \(av\) had capacity one, \(B\) was its only bin; no path now uses \(av\), so delete this edge and the leaf \(a\). Figure~\ref{fig:one-bin-move} shows the change.

\begin{figure}[htb]
\centering
\begin{tikzpicture}[x=1cm,y=1cm,
  dot/.style={circle,fill,inner sep=1.8pt},>=Stealth]
  \node[dot,label=above:{$a$}] (a) at (0,1.3) {};
  \node[draw,rounded corners,inner sep=3pt] at (-0.55,1.3) {$B$};
  \node[dot,label=below:{$v$}] (v) at (1,0) {};
  \node[dot,label=below:{$w$}] (w) at (3,0) {};
  \draw (a)--node[left] {$u_{av}$}(v);
  \draw[thick] (v)--node[above] {$u$}(w);
  \draw[densely dotted] (v)--(0.3,-0.55);
  \draw[densely dotted] (w)--(3.7,0);
  \node at (1.5,-0.9) {before};
  \draw[->] (4,0.5)--node[above,align=center,font=\small]
    {move the endpoints\\of paths in $B$}(6,0.5);
  \node[dot,label=above:{$a$}] (aa) at (6.7,1.3) {};
  \node[dot,label=below:{$v$}] (vv) at (7.7,0) {};
  \node[dot,label=below:{$w$}] (ww) at (9.7,0) {};
  \node[dot,label=above:{$b$}] (b) at (10.7,1.3) {};
  \node[draw,rounded corners,inner sep=3pt] at (11.25,1.3) {$B$};
  \draw (aa)--node[right] {$u_{av}-1$}(vv);
  \draw[thick] (vv)--node[above] {$u-1$}(ww);
  \draw (ww)--node[right] {$1$}(b);
  \draw[densely dotted] (vv)--(7,-0.55);
  \draw[densely dotted] (ww)--(10.4,0);
  \node at (8.7,-0.9) {after};
\end{tikzpicture}
\caption{Moving one bin. Paths in \(B\) acquire the new endpoint \(b\); other bins at \(a\) stay there. Only the affected edges are shown, with their capacities. If \(u_{av}=1\), delete \(av\) and \(a\) after the move.}
\label{fig:one-bin-move}
\end{figure}

Every other retained internal edge carries exactly the same paths as before, so its load continues to satisfy the required bound. Every path still joins distinct leaves. If deleting \(a\) makes \(v\) a leaf, no path ends there, and it has no bins. Thus the new instance satisfies the same three conditions for the bins and loads.

Color this instance recursively with ten colors. Then put the endpoints of the paths in \(B\) back at \(a\), restore the two old capacities, and keep all the colors. Delete \(b\) and \(bw\), which are now unused. Because the paths of \(B\) shared the new capacity-one edge \(bw\), they have distinct colors. Hence restoring them adds at most one path of each color to \(av\), and at most one to \(vw\). For any fixed color, the loads on these two edges are at most
\[
  (u_{av}-1)+1=u_{av}
  \qquad\text{and}\qquad
  (u-1)+1=u,
\]
respectively. If \(av\) was deleted, its load before restoration is zero and the first equality gives the same bound. The paths of \(B\) still have distinct colors when returned to their old bin. All other bins are unchanged, and no other old edge gains or loses a path. This gives the required coloring of the original tree.

\paragraph{Why the recursion terminates.} The contraction removes tree edges and never lengthens a path. The endpoint change may add one tree edge, but it makes at least five paths shorter by one edge and at most one path longer by one edge. The total path length therefore falls by at least four. In either case, the nonnegative integer
\[
  \text{number of tree edges}
  +\sum_{\text{paths }f}\text{number of edges in }\Path(f)
\]
strictly decreases. Formally, we use induction on this sum.

This also gives a polynomial-time algorithm. If the initial tree has \(m\) edges and there are \(p\) paths, the sum is at most \(m(p+1)\), so there are at most \(m(p+1)\) recursive steps. Each step takes polynomial time, including the coloring completions supplied by Lemma~\ref{lem:bin-completion}.
\end{proof}

\appendix 
\section{Packing lemma}
\label{sec:packing}

\begin{lemma}\label{lem:packing}
Let \(u\ge1\) be an integer. Any collection of indivisible items with positive integer sizes at most six and total size at most \(4u+2\) can be packed into at most \(u\) bins, each having total size at most six, without splitting any item.
\end{lemma}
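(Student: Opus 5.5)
The plan is to analyze exactly the greedy merging procedure used in Section~\ref{sec:trees}. Start with one bin per item, and repeatedly merge two bins whose combined size is at most six until no such pair remains. This terminates, because each merge reduces the number of bins. It never splits an item, and every bin keeps total size at most six. So it suffices to show that the final number of bins \(k\) satisfies \(k\le u\).

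The key structural fact at termination is that any two distinct bins have combined size at least seven. Let the final bin sizes be \(s_1,\dots,s_k\), all positive integers at most six. Two consequences follow immediately.
\begin{itemize}
  \item At most one bin has size at most three, since two such bins would have combined size at most six and could still be merged.
  \item If some bin has size \(s\le3\), every other bin has size at least \(7-s\ge4\).
\end{itemize}

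Now suppose for contradiction that \(k\ge u+1\), and split into two cases.

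If every bin has size at least four, the total size is at least \(4k\ge4u+4>4u+2\), a contradiction.

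Otherwise exactly one bin has some size \(s\in\{1,2,3\}\), and the remaining \(k-1\ge u\) bins each have size at least \(7-s\). The total size is then at least \(s+u(7-s)\), and
\[
  s+u(7-s)-(4u+2)=(3-s)(u-1)+1\ge1,
\]
using \(s\le3\) and \(u\ge1\). This again exceeds the allowed total \(4u+2\), a contradiction.

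Hence \(k\le u\), and the procedure runs in polynomial time since each merge reduces the number of bins.

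The only step that needs care is the second case: the single small bin, when \(s\) is close to three, makes the bound tight. For example, with \(s=3\) the bound \((3-s)(u-1)+1\) gives a margin of exactly one item unit. This is where the constant \(4u+2\) (rather than something larger) is used. The rest is bookkeeping, and the argument is the verification I would present.
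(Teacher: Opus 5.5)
Your proof is correct and follows essentially the same route as the paper: greedily merge bins until every pair sums to at least seven, then split on whether the smallest bin has size at most three. The only cosmetic difference is that the paper first proves the general bound \(L\ge 4q-1\) via \((q-2)(3-s)\ge0\) and then substitutes \(q\ge u+1\), whereas you substitute \(k-1\ge u\) directly and use \((3-s)(u-1)+1\ge1\).
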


\begin{proof}
Initially, place each item in its own bin. While there is a pair of bins whose combined size is at most six, merge such a pair. Every bin continues to have size at most six, and no item is split. The process terminates because each merge reduces the number of bins by one.

Let \(q\) be the number of bins at termination. If \(q\le1\), then \(q\le u\); hence assume \(q\ge2\). Denote the bin sizes by \(s_1,\ldots,s_q\), and let \(L=s_1+\cdots+s_q\) be their total size. No two remaining bins can be merged. Since the bin sizes are integers,
\[
  s_i+s_j\ge7
  \qquad\text{whenever }1\le i<j\le q.
\]
Let \(s=\min_i s_i\) be the smallest remaining bin size. If \(s\ge4\), then
\[
  L\ge4q\ge4q-1.
\]
If instead \(s\le3\), the smallest bin cannot be merged with any other bin, so every other bin has size at least \(7-s\). Therefore
\begin{align*}
  L
    &\ge s+(q-1)(7-s)\\
    &\ge 4q-1,
\end{align*}
where the last inequality follows from
\[
  s+(q-1)(7-s)-(4q-1)=(q-2)(3-s)\ge0.
\]
Thus \(L\ge4q-1\) in both cases. If \(q>u\), the integrality of \(q\) and \(u\) implies \(q\ge u+1\), and hence
\[
  L\ge4q-1\ge4(u+1)-1=4u+3.
\]
This contradicts the assumed bound \(L\le4u+2\). Therefore \(q\le u\), as desired.
\end{proof}

\section{Specialization of the edge-coloring theorem}
\label{sec:specialization}
\begin{proof}[Proof of Lemma~\ref{lem:bin-completion}]
This is a specialization of Edwards, Gir\~ao, van den Heuvel, Kang, Puleo, and Sereni~\cite[Theorem~9]{EGHKPS}. Represent bins by vertices and local paths by edges. In~\textup{(a)}, represent each crossing path by a precolored edge to a separate new leaf. The graph has maximum degree at most six and at most four precolored edges per vertex, so the cited theorem completes the coloring. In~\textup{(b)}, represent the crossing paths by edges to one common new vertex. The maximum degree is still at most six, so the same theorem colors this graph from scratch. The crossing edges receive distinct colors at the common vertex; permuting the colors throughout the graph matches them to their distinct prescribed colors.

\paragraph{Running time.} The edge-coloring steps can be performed in polynomial time as well. Chleb\'ik and Chleb\'ikov\'a~\cite[Theorem~3]{CC} give an algorithm for extending a partial vertex coloring when the number of available colors is at least the maximum degree of the graph. To apply it here, form a graph with one vertex for each edge of our multigraph, joining two vertices when their edges share an endpoint. Each edge of the multigraph meets at most \(5+5=10\) other edges, so this new graph has maximum degree at most ten. Its vertex colorings are precisely the proper edge colorings we need. The algorithm finds the extension whose existence was established above.
\end{proof}

\bibliographystyle{apalike} 
\bibliography{bibliography} 

\end{document}